\def\be{\begin{equation}}
\def\ee{\end{equation}}
\def\ba{\begin{array}{c}}
\def\ea{\end{array}}
\newcommand{\bea}{\begin{eqnarray}}
\newcommand{\eea}{\end{eqnarray}}
\newcommand{\kkt}{\kt\!\kt}
\newcommand{\pbr}{\prec\!}
\newcommand{\pkt}{\!\!\succ\,\,}
\newcommand{\kt}{\rangle}
\newcommand{\br}{\langle}
\newtheorem{thm}{Theorem}
\newtheorem{lemma}[thm]{Lemma}
\newenvironment{proof}{\noindent {\bf Proof.}}{\hfill$\square$\vspace{3mm}\endtrivlist}
\begin{document}


\begin{center}

{\Large \bf

Mutual compatibility/incompatibility of quasi-Hermitian quantum
observables

  }

\vspace{11mm}

{\bf Miloslav Znojil}$^{1,2,3}$

\vspace{3mm}

\end{center}

$^{1}$ {The Czech Academy of Sciences,
 Nuclear Physics Institute,
 Hlavn\'{\i} 130,
250 68 \v{R}e\v{z}, Czech Republic, {e-mail: znojil@ujf.cas.cz}}


$^{2}$ {Department of Physics, Faculty of
Science, University of Hradec Kr\'{a}lov\'{e}, Rokitansk\'{e}ho 62,
50003 Hradec Kr\'{a}lov\'{e},
 Czech Republic}

$^{3}$ {Institute of System Science, Durban University of Technology,
Durban, South Africa}

\vspace{3mm}

\newpage

\section*{Abstract}

In the framework of
quasi-Hermitian
quantum mechanics the
eligible operators of observables
may be non-Hermitian,  $A_j\neq A_j^\dagger$, $j=1,2, \ldots,K$.
In principle,
the standard probabilistic interpretation of the theory
can be re-established
via a reconstruction of
physical inner-product metric $\Theta\neq I$
guaranteeing the quasi-Hermiticity
$A_j^\dagger \,\Theta=\Theta\,A_j$.
The task is easy
at $K=1$
because there are many eligible metrics $\Theta=\Theta(A_1)$.
In our paper
the next case with $K=2$ is analyzed.
The criteria of the existence of
a shared metric $\Theta=\Theta(A_1,A_2)$
are presented and discussed.

\subsection*{Keywords}
.

quantum mechanics of unitary systems;

quasi-Hermitian representations of observables;

constructions of physical inner product metrics;

the criteria of existence of the shared metric for more observables;

\newpage

\section{Introduction}

It used to be widely accepted 
that a consistent picture of a
unitary
quantum system
can only be based on
our knowledge
of its observable quantities
represented by a set of
operators
$A_1$, $A_2$, \ldots
which are all self-adjoint
\textcolor{black}{in a suitable Hilbert space}
(cf.,
e.g., \cite{Messiah}).
More than thirty years ago, Scholtz,
Geyer and Hahne \cite{Geyer} pointed out
that
such a statement may lead to misunderstandings
because
\textcolor{black}{in some fairly realistic models,}
the operators of observables
could be defined as acting,
     \textcolor{black}{
simultaneously,
in {\em two different\,} Hilbert spaces, viz., in spaces
${\cal H}_{mathematical}$ and
${\cal H}_{physical}$.}

     \textcolor{black}{
By construction, the latter two spaces
have to coincide as the sets of ket vectors $|\psi\kt$.
Still, they may be
unitarily non-equivalent because they may
differ by the use of
the two alternative forms of the respective inner products.
Thus, in order to avoid misunderstandings,
one has to speak, first, about
a maximally user-friendly
inner product which specifies
a manifestly unphysical Hilbert space
${\cal H}_{mathematical}$.
At the same time, one also has to speak about
the second, correct physical inner product
which must be used to specify the other,
presumably user-unfriendly
Hilbert space of states
${\cal H}_{physical}$
which is still the only one which provides their
correct probabilistic interpretation.}

The technical details of the idea may be found in \cite{Geyer}.
A decisive advantage of
keeping the two spaces different
has been found in the possibility of
reduction of the difference to
the mere redefinition of the inner product,
 \be
 \br \psi_a|\psi_b\kt_{mathematical}\ \to \
 ( \psi_a|\psi_b)_{physical} \
 \equiv \ \br \psi_a|\Theta|\psi_b\kt_{mathematical}\,.
 \label{change}
 \ee
The amended (often called physical) inner-product metric
$\Theta \neq I$ should be a positive-definite and bounded
operator
which is self-adjoint
in ${\cal H}_{mathematical}$
and has a bounded inverse \cite{Siegl}.
Under these assumptions,
the ``obligatory''
requirement of Hermiticity of  $A_j$s in ${\cal H}_{physical}$
can be re-expressed,
in the preferred representation space
${\cal H}_{mathematical}$,
as their quasi-Hermiticity \cite{Dieudonne}
{\it alias\,}
$\Theta-$pseudo-Hermiticity \cite{ali},
 \be
 A_j^\dagger\, \Theta=\Theta\,A_j\,,
 \ \ \ \ \forall j\,.
 \label{foral}
 \ee
Needless to add that
the two Hilbert spaces would coincide
in the limit of $\Theta \to I$.
Within the generic theory with $\Theta \neq I$, nevertheless,
the ``hidden Hermiticity''
constraint (\ref{foral})
will still
imply the reality of the spectrum.

In most applications
the model-building process
proceeds in opposite direction.
In a way
sampled in \cite{Geyer} or \cite{BG},
one is just given
a set $\{A_j\}$ of
the candidates for observables.
All of these
operators
have to be treated,  in
${\cal H}_{mathematical}$, as non-Hermitian,
$A_j\neq A_j^\dagger$.
For all of them,
the reality of the spectrum
is merely necessary, but not sufficient, condition
of their observability and
mutual compatibility.
One has to prove the mathematical consistency of the theory
by showing that {\em all\,} of the preselected
operators of observables
{\em share\,} the quasi-Hermiticity property~(\ref{foral}).
A clarification of these items is also the main
purpose of our present paper.

In our preceding study \cite{arabky} of the problem
we have shown that even when one decides to work with the mere doublet of
the ``input'' operators $A_1$ and $A_2$,
the acceptable inner-product metric
need not exist at all. The essence of such a no-go theorem
was that the two ``arbitrary'' operators $A_1$ and $A_2$ had
to be quasi-Hermitian with respect to a single,
subscript-independent metric $\Theta$.
This simply imposes too many constraints upon $\Theta$
in general.

In what follows we will outline a continuation of the latter study.
We will introduce and describe an efficient strategy of
the construction of $\Theta$ whenever it does exist.


\section{Self-adjoint Hamiltonians in quasi-Hermitian form\label{dvojka}}

The
innovative
two-Hilbert-space
reformulation of quantum theory
as offered in paper \cite{Geyer}
may be
called quasi-Hermitian quantum mechanics (QHQM).
In its original formulation, unfortunately,
most of the mathematical
as well as phenomenological roots of such a theory
seemed to be rather specific,
aimed just at nuclear physicists.
Indeed, the authors themselves were nuclear physicists
so that they
decided to
illustrate
the merits
of the formalism
via a rather complicated
many-particle example.
In a broader physics community, therefore, the more abstract
theoretical message
as provided by paper \cite{Geyer}
remained
more or less unnoticed.

In the words of
a recent review paper \cite{ali},
``the main problem with this formalism
is that it is generally very difficult to
implement'' (cf. p. 1217 in {\it loc. cit.}).
A properly amended
version of the idea of QHQM
appeared, fortunately,
just a few years later.

\subsection{Quantum models with single quasi-Hermitian observable}

An ultimate and successful
version of the QHQM approach
to quantum systems has been proposed by
Bender with co-authors \cite{BB,BBJ}.
Their upgrade of the theory
has been simpler, better motivated,
better presented
and much more widely accepted.
Almost immediately,
the modified theory
(currently known
under a nickname of ${\cal PT}-$symmetric
quantum mechanics \cite{Carl})
became truly popular,
partly also due to
the methodically very fortunate
restriction of attention
to specific quantum systems characterized  by the use

\begin{itemize}

\item[(i)] of a {\em single\,} non-Hermitian
operator $A_1$
serving phenomenological purposes and
yielding, typically, bound-state energies as its eigenvalues \cite{book};

\item[(ii)] of the most common
formulation the theory called Schr\"{o}dinger picture
in which the observable
plays the role of quantum Hamiltonian, $A_1=H$ \cite{ali};

\item[(iii)] of the mere ordinary-differential
Hamiltonians $H=-d^2/dx^2+V(x)$ in which even the local
interaction potentials themselves
were kept maximally elementary \cite{Geza};

\item[(iv)] of an auxiliary antilinear parity times time-reversal
symmetry
(${\cal PT}-$symmetry). This made the Hamiltonians tractable
as self-adjoint in an {\it ad hoc\,} Krein
space \cite{Langer,AKbook}.

\end{itemize}

\noindent
All of these well-invented simplifications contributed
to the enormous growth of
popularity of the non-Hermitian representations of
quantum systems exhibiting ${\cal PT}-$symmetry.
What followed was
a highly productive and fruitful
transfer of the related mathematics
to several non-quantum areas of physics (cf., e.g., \cite{Christodoulides}).
Nevertheless, we believe that
it is now time to remove or weaken
at least some of the above-listed simplifications while
turning attention

\begin{itemize}

\item[(i)] to the models using at least two
observables, i.e., not only $A_1$ (Hamiltonian)
but also $A_2$
(in \cite{arabky}, the second operator was interpreted as
representing the coordinate);

\item[(ii)] to the possibilities of
description of systems
in the quasi-Hermitian Schr\"{o}dinger picture
of Ref.~\cite{Faria},
in the quasi-Hermitian interaction picture
of Refs.~\cite{NIP,NIPb},
or in the quasi-Hermitian Heisenberg picture
of Ref.~\cite{HP};

\item[(iii)] to some alternative simplifications
of dynamics using, e.g., the discretizations of
coordinates allowing
the systems to live in a finite-dimensional
Hilbert space ${\cal H}^{(N)}$;

\item[(iv)] to the systems with less symmetries
(and, in particular, without ${\cal PT}-$symmetry) and
to the
constructions making use, at $N<\infty$, of
the methods of linear algebra.

\end{itemize}


\subsection{The concept of Dyson map}

During the history of development of
QHQM
as outlined, say, in  \cite{book} or \cite{Carlbook},
people proposed several simplifications
treating the
inner-product metric $\Theta$ as a product of some other
auxiliary operators.
The first attempt
can be traced back to the Dyson's
paper \cite{Dyson} in which
{\em both\,} of the above-mentioned Hilbert spaces
${\cal H}_{mathematical}$ and ${\cal H}_{physical}$
  \textcolor{black}{
(with their shared ket-vector elements
denoted by the same symbol
$|\psi\kt$)}
were introduced as auxiliary,
complementing just the usual, third
Hilbert space ${\cal H}_{standard}$
of traditional textbooks
    \textcolor{black}{
(here we follow paper \cite{SIGMAdva}
and introduce, for the sake of clarity, the different,
``curved''
ket symbol $|\psi\pkt$
for its elements).}

The essence of the problem was that the
convergence of the routine variational
calculations of
the bound-state energy spectra
of a certain multi-fermioninc quantum system
appeared to be, in the latter
representation space, too slow.
The reason was
the existence of
fermion-fermion correlations.
Having this in mind,
Dyson
managed to accelerate the convergence
via a map
between states
$|\psi\pkt \in {\cal H}_{standard}$ and $
|\psi\kt \in {\cal H}_{mathematical}$,
 \be
 |\psi\pkt =\Omega\,|\psi\kt\,.
 \label{dm}
 \ee
%
Although the mapping was designed to
mimic the correlations,
its conventional unitary form did not work.
A success has only been achieved when
the operators $\Omega$ were chosen non-unitary, i.e., such that
 \be
 \Omega^\dagger \,\Omega = \Theta \neq I\,.
 \label{met}
 \ee
This was precisely the reason why Dyson had to replace the
space
${\cal H}_{mathematical}$
by its amendment
${\cal H}_{physical}$. Indeed,
only the
latter Hilbert space
could have been declared physical,
unitarily equivalent to its standard textbook
alternative. Indeed, the emergent equivalence
appeared to be due the coincidence
of all of the inner products of states
before and after the mapping,
 \be
 \pbr \psi_a|\psi_b\pkt =
 \br \psi_a|\Theta|\psi_b\kt\,.
 \label{adm}
 \ee

In nuclear physics
the Dyson's map (\ref{dm})
found a number of applications \cite{Jensen}, mainly
because the
complicated fermionic statistics
(reflecting the Pauli principle in  ${\cal H}_{standard}$)
was, by construction, simplified and replaced
by its
bosonic-representation
alternative
in both
${\cal H}_{mathematical}$
and
${\cal H}_{physical}$.
This was another benefit which
simplified the
calculations performed
in the
most friendly Hilbert space
${\cal H}_{mathematical}$ {\it alias\,}, briefly,
in ${\cal H}_{}$.
Simultaneously,
with the knowledge of a (preselected)
factor-operator $\Omega$
it was entirely straightforward
to interpret the
manifestly non-Hermitian Hamiltonian $A_1\,\equiv\,H \neq H^\dagger$
as isospectral
with its transform
  \be
  \Omega\,H\,\Omega^{-1}=\mathfrak{h}\,.
 \label{metr}
  \ee
Due to the validity of
the quasi-Hermiticity (\ref{foral}) of $H=A_1$,
the latter operator $\mathfrak{h}$
is
self-adjoint in ${\cal H}_{standard}$ and
isospectral with $H$.

\section{The problem of compatibility}

In 1956,
Dyson
introduced the mapping (\ref{dm}) + (\ref{met})
interpreted as an auxiliary, non-unitary
transformation of correlated
fermions into interacting bosons.
He
managed to represent fermionic
(so called Fock) space ${\cal H}_{standard}$ in unphysical
bosonic ${\cal
H}_{mathematical}$
as well as in the correct bosonic ${\cal
H}_{physical}$. All this was just
the technically motivated
change of representation,
with the difference between the latter two Hilbert spaces
only involving
the
fermion-correlation-reflecting
change of the inner product
in the two interacting-boson representations
of the original fermionic system.

In 1992 the authors of review \cite{Geyer}
extended these considerations and
concluded that
under
certain conditions,
the role of the observables of a
conventional quantum system
could be played by a set
$A_j$
of manifestly non-Hermitian but bounded operators possessing
real and discrete spectra.
For illustration
they choose a system
characterized by a triplet $A_1$, $A_2$ and $A_3$
of certain linearly independent
preselected non-Hermitian
candidates for the observables
(cf. section Nr. 3 in {\it loc. cit.}).
Such a choice was
truly impressive
because it admitted even
the emergence of quantum phase transitions.
Later, as we already mentioned,
such a direction of research has been
found difficult and, more or less completely,
abandoned.
The subsequent developments of the field
led to the preference of the studies of the
less realistic quantum models, with their dynamics
controlled by the mere single non-Hermitian observable.

\subsection{Quantum models with two quasi-Hermitian observables}

Until recently, most of the
applications
of the QHQM approach
remained restricted
to the systems characterized by the
observability of the mere single operator $A_1=H$
representing the Hamiltonian.
In our present paper we are going to return
to the origins. We will assume
a more realistic dynamical input
knowledge of more than one non-Hermitian observable,
having in mind the well known fact that
such a knowledge is essential and
needed, after all, even for
an unambiguous reconstruction
of the correct physical metric $\Theta$.

In 2017 we already initiated such a return
to the origins in \cite{arabky}.
We
managed to show there that
in the general multi-observable setting,
even the first additional linearly independent
operator candidate $A_2$
for a non-Hamiltonian observable
cannot be arbitrary.
In a way inspired by a few specific realistic
examples we demonstrated that
an unrestricted choice of the two linearly independent
operators $A_1$ and $A_2$
{\em need
not\,} be acceptable as
representing a pair of observables
of a single quantum system in general.

For the purposes of the proof of the latter statement
it has been sufficient to use perturbation techniques.
We showed that even in the specific ``almost Hermitian''
models with
two preselected ``almost Hermitian''
candidates $A_1$ and $A_2$ for observables and under
the related small-non-Hermiticity-reflecting ansatz
 \be
 \Theta=I+\epsilon F + {\cal O}(\epsilon^2)
 \label{artif}
 \ee
one arrives at the constraints
(in fact, at the consequences of Eq.~(\ref{foral}))
for which
a nontrivial first-order-metric solution~$F$
need not exist at all.

In our present continuation of the latter study
we will drop
the  ``almost Hermiticity'' assumptions
as over-restrictive and redundant.
We will analyze the
``mutual compatibility/incompatibility'' problem
in its full generality. For the
methodical reasons as well as for the sake of brevity
of our text
we will only assume that
the dimension $N$ of the relevant Hilbert spaces is finite.

\subsection{A new trick: Two auxiliary standard spaces}

In section \ref{dvojka} we worked, in effect, with
a triplet of Hilbert spaces, viz., with ${\cal H}_{standard}$
(known from the conventional textbooks and, in the
above-mentioned Dyson's
example, ``fermionic'') and with ${\cal H}_{mathematical}$ and
${\cal H}_{physical}$ (in the Dyson's
example, both of them were ``bosonic'').
The former one was related to the other two
by the Dyson map $\Omega$ (cf. Eq. (\ref{dm})),
while the ``physicality'' of ${\cal H}_{physical}$
was due to relation (\ref{adm}).

There existed just two versions of the observable
Hamiltonian (cf. Eq.~(\ref{metr})).
More precisely, the representation $\mathfrak{h}$ of the
observable Hamiltonian in ${\cal H}_{standard}$
was, necessarily, diagonalizable.
In some cases,
one could even work, directly, with its most elementary diagonal-matrix
form (cf., e.g., Eq. Nr. 5 in \cite{frantiPLA}).

With the turn of attention to the
quantum models characterized by the
pairs of observables $A_1$ and $A_2$,
the overall theoretical scenario becomes different.
First of all,
once we fix a Dyson map, we may still
consider the pair of analogues
of Eq.~(\ref{metr}), viz., mappings $A_1\,\to\,\mathfrak{a}_1$
and $A_2\,\to\,\mathfrak{a}_2$.
Now, the point is that
we can only diagonalize {\em one\,} of the resulting
operators $\mathfrak{a}_j$.
For this reason, we will rather insist on their diagonality
and achieve this goal by the use of {\em two different\,}
Dyson maps,
 \be
 \Omega_1\,A_1\,\Omega_1^{-1} = \mathfrak{a}_1\,,
 \ \ \ \
 \Omega_2\,A_2\,\Omega_2^{-1} = \mathfrak{a}_2\,
 \label{metternich}
 \ee
This leads to the two different
images of states in ${\cal H}_{standard}$
(cf. Eq. (\ref{dm})) and, therefore,
to the two different pictures of physics
or, after a re-wording,
to the two different versions
of the physical Hilbert space of textbooks.

In the light of the definition
of metric (\ref{met}) and of the
inner-product equivalence (\ref{adm}),
we would also have the two
versions of the
correct
Hilbert space ${\cal H}_{physical}$
which would be, naturally, mutually incompatible.
There is only one way of avoiding the difficulty:
Given the two
diagonalization requirements (\ref{metternich}),
we must
make use of the ``hidden''
ambiguity of the respective inner-product metrics
as explained in our older paper \cite{SIGMAdva}.

\section{Problem of compatibility/incompatibility}

\textcolor{black}{
The assumption of diagonality
of the two above-mentioned
real matrices $\mathfrak{a}_1$
(with non-vanishing
elements $(\mathfrak{a}_1)_{nn}=a_n^{[1]}$) and $\mathfrak{a}_2$
(with $(\mathfrak{a}_2)_{nn}=a_n^{[2]}$)
enables us
to replace Eq.~(\ref{metternich})
by its equivalent
Hermitian-conjugate matrix-equation
alternative
 \be
 A_1^\dagger\,\Omega_1^{\dagger} = \Omega_1^\dagger\,\mathfrak{a}_1\,,
 \ \ \ \
 A_2^\dagger\,\Omega_2^{\dagger} = \Omega_2^\dagger\,\mathfrak{a}_2\,.
 \label{mmetternich}
 \ee
Moreover, we may treat both of the
matrices $\Omega_j^\dagger$ with $j=1$ or $j=2$
as concatenations of eigenvectors of the respective
matrices $A_j^\dagger$.
Hence, once we accept the notation convention
of paper \cite{SIGMAdva} and  once we
mark these eigenvectors by a doubled
ket symbol, i.e., once we replace $\kt$ by $\kkt$,
we will be able to re-read
every column of the two Eqs.~(\ref{mmetternich})
as a relation
which has the
very standard form of a Schr\"{o}dinger-equation-like
eigenvalue problem,
  \be
 A_1^\dagger\, |\psi_m^{[1]}\kkt =  |\psi_m^{[1]}\kkt\,{a}_m^{[1]}\,,
 \ \ \ \ \ \ \
 A_2^\dagger\, |\psi_m^{[2]}\kkt =  |\psi_m^{[2]}\kkt\,{a}_m^{[2]}\,,
 \ \ \ \ \ \ m=1,2,\ldots,N
 \,.
 \label{ternich}
 \ee
As a consequence,
both of the concatenations
of the column vectors will have the
obvious form
composed of the eigenvectors
obtained by the solution of Eqs.~(\ref{ternich}),
 $$
 \Omega_j^\dagger
 =
 \{
 |\psi_1^{[j]}\kkt,|\psi_2^{[j]}\kkt,\ldots,|\psi_N^{[j]}\kkt
 \}\,,
 \ \ \ \ \ j=1,2
 \,.
 $$
Once
we decided to deal, in the present paper,
just with the pairs of observables
$A_1$ and $A_2$,
the latter ``ketket'' eigenvectors
had to form just the two different families
of eigenvectors
of
$A_j^\dagger$ with $j=1$ or with $j=2$
in general. What is essential is that they
may be calculated by stanard methods, and that they
may be then used to define the metric via Eq.~(\ref{met}).
}

\subsection{Residual freedom}

Once we are given the two
dynamics-representing operators $A_1$ and $A_2$
which are defined as non-Hermitian in
the finite-dimensional Hilbert space ${\cal H}_{mathematical}^{(N)}$,
relations (\ref{metternich})
may be  re-written, equivalently, as the
two entirely conventional matrix eigenvalue problems (\ref{ternich}).
The eigenvectors form here the columns of
matrices $\Omega_j^{\dagger}$.
Standard methods may be used to reconstruct
both
the latter two matrices
and the corresponding diagonal matrices $\mathfrak{a}_j$
of
eigenvalues which are all real
and, say, non-degenerate.

Now, our most important
observation is that
every (i.e., the $n-$th) column of
the benchmark solutions $\Omega_1^{\dagger}$
and/or $\Omega_2^{\dagger}$
of the two
respective eigenvalue problems (\ref{ternich})
is only defined up to an arbitrary
(i.e., real or complex) non-vanishing
multiplication factor $\kappa_1^{[n]} \neq 0$
or $\kappa_2^{[n]} \neq 0$, respectively.
Thus, once we keep these factors just real and positive, and
once we let these factors
form the two respective diagonal matrices
$\mathfrak{c}_1$
and
$\mathfrak{c}_1$,
we reveal that
every initial, arbitrarily normalized solution
matrix $\Omega_j^\dagger$
is not unique
and can be modified as follows,
 \be
 \Omega_j^\dagger \ \to \ \Omega_j^\dagger \, \mathfrak{c}_j\,,\ \ \ j = 1, 2\,.
 \ee
Using such a transformation, therefore,
one obtains all of the acceptable solutions
which are parametrized by the respective
real, positive and diagonal matrices $\mathfrak{c}_j$.
The $n-$th column of our arbitrarily normalized and
fixed initial solution $\Omega_j^\dagger $
of the respective Eq.~(\ref{ternich})
is being multiplied by an entirely arbitrary subscript-dependent
constant
$\kappa^{[n]}_j=\left (\mathfrak{c}_j\right )_{n,n}\neq 0$.
These values
may even be complex because, in a way pointed out in \cite{SIGMAdva},
there is no constraint imposed upon the variability
of the diagonal elements of $\mathfrak{c}_j$ with $ j = 1, 2$.
Here,
without any loss of generality we will
use just the real and positive values of these parameters.

Similarly, also the eligible metrics
defined by formula (\ref{met})
become, for the same reason, ambiguous,
 \be
 \Theta_j  = \Theta_j (I)=\Omega_j^\dagger\,\Omega_j
 \ \to \
 \Theta_j = \Theta_j (\mathfrak{c}_j)=\Omega_j^\dagger\,\mathfrak{c}_j^2
 \,\Omega_j
 \,,\ \ \ j = 1, 2\,.
 \label{[11]}
 \ee
At the same time, as long as such a characterization
of the ambiguity is exhaustive,
we may now formulate our present main conclusion.

\begin{lemma}
The two $N$ by $N$ matrix candidates $A_1$ and $A_2$ for observables
can be declared compatible if and only if
there exist positive diagonal matrices $\mathfrak{c}_1$
and $\mathfrak{c}_2$ such that
 \be
 \Theta_1 (\mathfrak{c}_1)=\Theta_2 (\mathfrak{c}_2)\,.
 \label{veri}
 \ee
\end{lemma}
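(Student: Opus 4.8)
The plan is to show that condition (\ref{veri}) is nothing but a restatement of the definition of compatibility, once we combine the parametrisation of the residual freedom with the fact that the metrics $\Theta_j(I)=\Omega_j^\dagger\Omega_j$ exhaust (up to that freedom) \emph{all} admissible metrics for the single operator $A_j$. First I would recall the standard fact, valid because each $A_j$ has real non-degenerate spectrum and is therefore diagonalisable, that the general solution $\Theta$ of the single quasi-Hermiticity relation $A_j^\dagger\Theta=\Theta A_j$ is exactly $\Theta=\Omega_j^\dagger D\,\Omega_j$ with $D$ an arbitrary positive diagonal matrix in the eigenbasis of $A_j^\dagger$; writing $D=\mathfrak{c}_j^2$ this is precisely the family $\Theta_j(\mathfrak{c}_j)$ of (\ref{[11]}). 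This step uses only the construction of $\Omega_j^\dagger$ as the concatenation of eigenvectors of $A_j^\dagger$ from Eqs.~(\ref{ternich}) together with the observation on residual freedom made just above the Lemma; the non-degeneracy of the spectra guarantees that the diagonal $\mathfrak{c}_j$ is the \emph{only} remaining freedom.

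Next I would argue the two implications. For the ``only if'' direction: if $A_1$ and $A_2$ are compatible, then by definition (Eq.~(\ref{foral}) with $K=2$) there is a single positive, bounded, boundedly-invertible, self-adjoint $\Theta$ with $A_1^\dagger\Theta=\Theta A_1$ and $A_2^\dagger\Theta=\Theta A_2$. Applying the parametrisation of the previous paragraph separately to $j=1$ and $j=2$, this $\Theta$ must be of the form $\Theta_1(\mathfrak{c}_1)$ for some positive diagonal $\mathfrak{c}_1$ and simultaneously of the form $\Theta_2(\mathfrak{c}_2)$ for some positive diagonal $\mathfrak{c}_2$; hence $\Theta_1(\mathfrak{c}_1)=\Theta=\Theta_2(\mathfrak{c}_2)$, which is (\ref{veri}). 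For the ``if'' direction: suppose positive diagonal $\mathfrak{c}_1,\mathfrak{c}_2$ exist with $\Theta_1(\mathfrak{c}_1)=\Theta_2(\mathfrak{c}_2)=:\Theta$. Then $\Theta$ is positive-definite (being $\Omega_1^\dagger\mathfrak{c}_1^2\Omega_1$ with $\Omega_1$ invertible and $\mathfrak{c}_1$ positive), self-adjoint, and in finite dimension automatically bounded with bounded inverse; and by construction it satisfies both quasi-Hermiticity relations, so $A_1,A_2$ share a metric and are compatible.

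The main obstacle, and the place where a little care is needed, is justifying the completeness claim in the first paragraph — that $\Theta_j(\mathfrak{c}_j)$ really ranges over \emph{every} admissible metric for $A_j$, not merely a convenient subfamily. The potential gap is whether allowing the off-diagonal entries of $D$ (or complex phases in $\mathfrak{c}_j$, as flagged via \cite{SIGMAdva}) could enlarge the solution set: one must check that the intertwining relation $A_j^\dagger\Theta=\Theta A_j$ forces $\Omega_j^{-\dagger}\Theta\,\Omega_j^{-1}$ to commute with the diagonal matrix $\mathfrak{a}_j$, and that non-degeneracy of $\mathfrak{a}_j$ then forces this conjugated object to be diagonal, while positivity and self-adjointness force its diagonal entries to be real and positive — so no generality is lost by restricting to real positive $\mathfrak{c}_j$. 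Once this is in place the two implications are immediate and the Lemma follows; the remainder is bookkeeping about which properties of $\Theta$ are free in finite dimension.
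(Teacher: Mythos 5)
Your proof is correct and follows essentially the same route as the paper, which reduces compatibility to the intersection of the two families $\Theta_j(\mathfrak{c}_j)=\Omega_j^\dagger\,\mathfrak{c}_j^2\,\Omega_j$ of Eq.~(\ref{[11]}), i.e., exactly condition~(\ref{veri}). The only difference is that the paper merely asserts the exhaustiveness of this parametrisation (with a pointer to \cite{SIGMAdva}), whereas you justify it explicitly and correctly: the intertwining relation forces $\Omega_j^{-\dagger}\,\Theta\,\Omega_j^{-1}$ to commute with the non-degenerate diagonal $\mathfrak{a}_j$, hence to be diagonal, and Hermiticity plus positivity of $\Theta$ make its entries real and positive, so no generality is lost by taking $\mathfrak{c}_j$ real and positive.
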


\subsection{Criteria}

In a way not noticed in \cite{arabky},
the test and verification of coincidence (\ref{veri})
becomes facilitated by the solution of Eqs.~(\ref{ternich}).
This enables us to rewrite Eq.~(\ref{veri})
as relation
 \be
 \Omega_1^\dagger\,\mathfrak{c}_1^2
 \,\Omega_1=\Omega_2^\dagger\,\mathfrak{c}_2^2
 \,\Omega_2\,
 \label{2veri}
 \ee
in which one can only vary
the $2N$ diagonal
matrix elements of
$\mathfrak{c}_1^2$ and $\mathfrak{c}_2^2$.

One of the consequences of the
set of constraints (\ref{2veri})
is that we can immediately eliminate
half of the variable parameters since we can write, say,
 \be
 \mathfrak{c}_2^2=M^\dagger\,\mathfrak{c}_1^2
 \,M\,,
 \ \ \ \ M=
 \Omega_1\,
 \Omega_2^{-1}\,.
 \label{3veri}
 \ee
Both sides of this relation
are Hermitian matrices which
have two parts. The diagonal
parts are elementary and
provide simply an explicit definition
of the diagonal matrix $\mathfrak{c}_2^2$ in terms
of the known matrix $M$ and the
variable diagonal matrix $\mathfrak{c}_1^2$.

The $N-$plet of the
real and positive variables in $\mathfrak{c}_1^2$
is then still constrained by the
rest of relation (\ref{3veri}) which
can be read, at the larger $N$, as an
over-determined
set of $N(N-1)/2$ constraints.
They may be written, say,
as a multiplet of linear
equations
 \be
 \left (
M^\dagger\,\mathfrak{c}_1^2
 \,M
 \right )_{m,n}=0\,, \ \ \ \ \forall m<n\,.
 \label{[15]}
 \ee
In practice, the implementation of
such a criterion
would
\textcolor{black}{be only straightforward
in the two-dimensional case since at
$N=2$ we would just have a single constraint (\ref{[15]})
imposed upon two real parameters. In general,
in contrast, the two arbitrarily chosen
independent matrices $A_1$
and $A_2$ will not be mutually compatible at $N\geq 4$
(and the more so when $N = \infty$)
since $N(N-1)/2 > N$. In these cases one will}
have to rely on the purely numerical techniques.

For some purposes, a less pragmatic
alternative approach to the problem could be
based on our following final result.

\begin{lemma} \label{lemmadva}
  The two $N$ by $N$ matrix candidates $A_1$ and $A_2$ for observables
can be declared compatible (i.e., simultaneously quasi-Hermitian)
if and only if the $2N-$parametric
matrix
 \be
 {\cal U}(\mathfrak{c}_1,\mathfrak{c}_2)=\mathfrak{c}_1\,M\,
 \mathfrak{c}_2^{-1}
 \label{[16]}
 \ee
is unitary.
\end{lemma}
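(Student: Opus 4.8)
The plan is to derive the asserted equivalence directly from the preceding Lemma (the criterion $\Theta_1(\mathfrak{c}_1)=\Theta_2(\mathfrak{c}_2)$) by a short chain of \emph{reversible} algebraic manipulations, so that no genuinely new idea is required. As a preliminary I would record that each $\Omega_j^\dagger$, being a concatenation of a complete set of eigenvectors of $A_j^\dagger$ belonging to the (by assumption non-degenerate) eigenvalues, is a regular matrix; hence $\Omega_1$, $\Omega_2$ and $M=\Omega_1\Omega_2^{-1}$ are all invertible, which legitimizes every multiplication used below.

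First I would pass from the compatibility condition in the form (\ref{2veri}) to the form (\ref{3veri}): multiplying $\Omega_1^\dagger\,\mathfrak{c}_1^2\,\Omega_1=\Omega_2^\dagger\,\mathfrak{c}_2^2\,\Omega_2$ on the left by $(\Omega_2^\dagger)^{-1}$ and on the right by $\Omega_2^{-1}$, and using $(\Omega_2^\dagger)^{-1}\Omega_1^\dagger=(\Omega_1\Omega_2^{-1})^\dagger=M^\dagger$, yields $\mathfrak{c}_2^2=M^\dagger\,\mathfrak{c}_1^2\,M$. Next I would exploit that $\mathfrak{c}_1,\mathfrak{c}_2$ are real, positive and diagonal, hence Hermitian and invertible with Hermitian inverses: writing $M^\dagger\,\mathfrak{c}_1^2\,M=(\mathfrak{c}_1 M)^\dagger(\mathfrak{c}_1 M)$ and conjugating $\mathfrak{c}_2^2=(\mathfrak{c}_1 M)^\dagger(\mathfrak{c}_1 M)$ by $\mathfrak{c}_2^{-1}$ converts it into $I=(\mathfrak{c}_1 M\mathfrak{c}_2^{-1})^\dagger(\mathfrak{c}_1 M\mathfrak{c}_2^{-1})=\mathcal{U}(\mathfrak{c}_1,\mathfrak{c}_2)^\dagger\,\mathcal{U}(\mathfrak{c}_1,\mathfrak{c}_2)$ with $\mathcal{U}$ as in (\ref{[16]}). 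Since $\mathcal{U}$ is a square $N$ by $N$ matrix, a one-sided inverse is automatically two-sided, so $\mathcal{U}(\mathfrak{c}_1,\mathfrak{c}_2)^\dagger\mathcal{U}(\mathfrak{c}_1,\mathfrak{c}_2)=I$ is equivalent to the unitarity of $\mathcal{U}(\mathfrak{c}_1,\mathfrak{c}_2)$.

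Finally I would observe that the whole chain $(\ref{veri})\equiv(\ref{2veri})\Leftrightarrow(\ref{3veri})\Leftrightarrow[\,\mathcal{U}(\mathfrak{c}_1,\mathfrak{c}_2)^\dagger\mathcal{U}(\mathfrak{c}_1,\mathfrak{c}_2)=I\,]$ runs backwards equally well (multiply by $\mathfrak{c}_2$ on both sides, then conjugate by $\Omega_2^\dagger$ and $\Omega_2$). Hence, for every fixed admissible pair $(\mathfrak{c}_1,\mathfrak{c}_2)$ the equality $\Theta_1(\mathfrak{c}_1)=\Theta_2(\mathfrak{c}_2)$ holds precisely when $\mathcal{U}(\mathfrak{c}_1,\mathfrak{c}_2)$ is unitary; quantifying over positive diagonal $\mathfrak{c}_1,\mathfrak{c}_2$ and invoking the preceding Lemma then yields the claim. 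I do not anticipate a real obstacle: the only points needing care are the three elementary ones already flagged — invertibility of $\Omega_1,\Omega_2$ (from non-degeneracy of the spectra), existence of $\mathfrak{c}_j^{-1}$ (from strict positivity of the diagonal entries), and the remark that $\mathcal{U}^\dagger\mathcal{U}=I$ already forces $\mathcal{U}\mathcal{U}^\dagger=I$ in finite dimension.
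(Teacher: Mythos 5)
Your argument is correct and follows essentially the same route as the paper, whose proof simply states that unitarity of ${\cal U}$ is equivalent to requirement (\ref{3veri}); you merely spell out the reversible algebra $(\ref{2veri})\Leftrightarrow(\ref{3veri})\Leftrightarrow {\cal U}^\dagger{\cal U}=I$ together with the invertibility caveats. Nothing further is needed.
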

\begin{proof}
The unitarity
is equivalent to requirement~(\ref{3veri}).
\end{proof}

 \noindent
One of the consequences of the latter observation
would be constructive.
At a given Hilbert-space dimension $N$, indeed,
one could choose an arbitrary unitary matrix ${\cal U}$
and, having recalled Eq.~(\ref{[16]}), one would be able to
define the matrix $M$.
After its
pre-multiplication by some $2N$ arbitrary constants,
a more or less arbitrary factorization
would yield,
in the light of Eq.~(\ref{3veri}),
the doublet of matrices $\Omega_1$ and $\Omega_2$.
In terms of any one of them
one could define the ``shared'' metric $\Theta$.
An arbitrary choice of the
two diagonal matrices $\mathfrak{a}_1$
and $\mathfrak{a}_2$ would then finally lead to the
construction of the compatible pair of operators
$A_1$ and $A_2$ via the respective relations (\ref{metternich}).

In an opposite direction
of the constructive analysis, the process of
implementation of the
unitarity criterion (\ref{[16]})
could be
based
on the most general unitary-matrix ansatz ${\cal U}$ for the
left-hand-side of Eq.~(\ref{[16]}).
Thus, in the simplest example with $N=2$
such an ansatz would be four-parametric,
 \be
 {\cal U}=\exp \left [
 \begin{array}{cc}
 i\alpha&i\gamma+\delta\\
 i\gamma-\delta&i\beta
 \ea
 \right ]\,.
 \ee
This means that at any given two-by-two
dynamical-input-information matrix $M^{(2)}$ one could
search for the four unknown parameters
$\kappa_j^{[n]}$ with $j=1,2$ and $n=1,2$
via the solution of the set of constraints (\ref{[16]})
forming a quadruplet of coupled nonlinear
algebraic equations.

\textcolor{black}{
\section{Example}
}


\textcolor{black}{
One of
the expected
outcomes of our present methodical study can be seen in its possible
relevance in the theory of quantum gravity.
In this context, indeed,
many conventional quantization
methods
cease to be applicable
because they rely upon the
availability of a classical ``background''
represented, typically, by the space or space-time.
In the case of gravitational field, in contrast,
one also has to quantize the space-time itself,
in principle at least \cite{Rovelli}.
As a consequence,
a consistent quantum theory of gravity must be,
in Ashtekar's words, ``background-independent'' \cite{Ashtekar}.
In this sense, a minimal number of
the candidates for observables $A_j$ is two,
and the demonstration of their mathematical compatibility
is of a deep and fundamental relevance, indeed.
}

\textcolor{black}{
In a narrower setting of quantum cosmology, in addition,
one has to quantize classical scenarios in which
one often encounters several
different forms of singularities.
Thus, typically, the
latter candidates $A_j$
have to admit the existence of the
singularities
represented, in the Hilbert-space operator context,
by the so called Kato's \cite{Kato} exceptional points (EPs).
In a realistic setting these operators even
have to be non-stationary and non-Hermitian
and, in parallel, also the physical Hilbert-space inner-product metric itself
has to be non-stationary and non-Hermitian.
}

\textcolor{black}{
In spite of all of these technical challenges,
the practical applicability of the general theory
has still been demonstrated and illustrated, in the existing literature,
by a number of simple physical examples.
A few particularly suitable examples of such a type
were described in Ref.~\cite{passage}.}

\textcolor{black}{
From the purely phenomenological point of view,
one of the important merits of the latter classes of models
is that all of them admit the non-Hermitian exceptional-point
degeneracy \cite{Kato}. This means that
all of them could
be used as mimicking the
Big-Bang-like singularity
in quantum cosmology.
Their aadditional methodical merit is
that all of these families
of solvable models
are represented by matrices $A_j$ of arbitrary dimensions $N$.
}

\textcolor{black}{
As we already mentioned above, nevertheless,
we do not need
the large-matrix models
(because
the conditions of their mutual compatibility become
over-restrictive
at $N\geq 4$)
nor the minimal models with $N=2$
(for them the parametric freedom survives).
Thus, we decided to pick up,
for our present illustration purposes,
just the one-parametric sets of the
$N=3$ matrices such that
 \be
 {A_1}^\dagger=
 \left[ \begin {array}{ccc} -2&\sqrt {2-2\,{s}^{2}}&0\\
 \noalign{\medskip}-\sqrt {2-2\,{s}^{2}}&0&\sqrt {2-2\,{s}^{2}}
 \\
 \noalign{\medskip}0&-\sqrt {2-2\,{s}^{2}}&2
 \end {array} \right]
 \label{asyre}
 \ee
and
 \be
 {A_2}^\dagger=
 \left[ \begin {array}{ccc}
  -2\,ia&\sqrt {2}&0\\\noalign{\medskip}
\sqrt {2}&0&\sqrt {2}\\\noalign{\medskip}0&\sqrt {2}&2\,ia\end {array}
 \right]\,.
 \label{cosy}
 \ee
They seem totally different
(notice that the former one is asymmetric but real while the
latter one is complex symmetric).
Moreover, as long as we can quickly determine their
diagonal partners
of Eq.~(\ref{metternich}), say,
 \be
 {\bf a_1}=
\left[ \begin {array}{ccc}
 0&0&0\\\noalign{\medskip}0&-2\,s&0\\
 \noalign{\medskip}0&0&2\,s\end {array} \right]
 \label{asyredi}
 \ee
and
 \be
 {\bf a_2}=
 \left[ \begin {array}{ccc}
  0&0&0\\
 \noalign{\medskip}0&-2\,\sqrt {-{a}^{2}+1}&0\\
 \noalign{\medskip}0&0&2\,\sqrt {-{a}^{2}+1}\end {array}
 \right]
 \label{cosydi}
 \ee
we may also find a closed one-parametric forms
of the arbitrarily normalized related matrices
 \be
 \Omega_1^\dagger=
 \left[ \begin {array}{ccc}
  -2+2\,{s}^{2}&2\,s+1+{s}^{2}&-2\,s+1+{s}^{2}\\
 \noalign{\medskip}-2\,\sqrt {2-2\,{s}^{2}}&\sqrt {2-2\,{s}^{2}}
 \left( s+1 \right) &-\sqrt {2-2\,{s}^{2}} \left( -1+s \right)
\\\noalign{\medskip}-2+2\,{s}^{2}&1-{s}^{2}&1-{s}^{2}\end {array}
 \right]
 \label{asyreQ}
 \ee
plus, similarly, also matrix $\Omega_2^\dagger$
(which appeared, incidentally, just a bit too large for its display in print).
 }

 \textcolor{black}{
In the next step of illustration we will fix the parameters $s=1/2$
and $a=1/2$ yielding a simpler formula for
 \be
 \Omega_1^\dagger=
 \left[ \begin {array}{ccc} -3/2&9/4&1/4\\
 \noalign{\medskip}-\sqrt {6}&3/4\,\sqrt {6}&1/4\,\sqrt {6}\\
 \noalign{\medskip}-3/2&3/4&3/4
\end {array} \right]
 \label{asyreQb}
 \ee
as well as a more comfortably printable
formula for
  \be
 {\Omega_2}^\dagger=  
\left[ \begin {array}{ccc} 3/2&3/8+3/8\,i\sqrt {3}&3/8-3/8\,i\sqrt {3}\\
\noalign{\medskip}3/4\,i\sqrt {2}&-3/8\,\sqrt {2}\sqrt {3}-3/8\,i
\sqrt {2}&3/8\,\sqrt {2}\sqrt {3}-3/8\,i\sqrt {2}\\\noalign{\medskip}-
3/2&3/4&3/4\end {array} \right]\,.
 \label{cosyQ}
 \ee
What remains to be done is just the
reconstruction of the two entirely general alternative
three-parametric metrics of Eq.~(\ref{veri}) or (\ref{2veri}).
With abbreviations $\left (\mathfrak{c}_1\right )_{nn}^2=x_n$
and $\left (\mathfrak{c}_2\right )_{nn}^2=y_n$
these $N$ by $N$ identities are tractable as linear
equations which have,
ultimately, the following
explicit
solution
defined in terms of a single variable $\varrho$, viz.,
 $$
   {\it x_1}=-1/12\,{\it \varrho}+1/12\,i\sqrt {3}{\it \varrho}\,,
 $$
 $$
   {\it x_2}=1/9\,{\it \varrho} \,,
 $$
 $$
   {\it x_3}={\it \varrho}\,,
 $$
 $$
   {\it y_1}={\frac {15}{16}}\,{\it \varrho}-3/16\,i\sqrt {3}{\it \varrho}\,,
 $$
 $$
   {\it y_2}=-3/16\,{\it \varrho}-{\frac {9}{16}}\,i\sqrt {3}{\it \varrho}\,,
 $$
 $$
   {\it y_3}=3/8\,{\it \varrho}+3\,i\sqrt {3}{\it \varrho}\,.
 $$
Obviously, this solution exists but it is not real and positive.
We may conclude that there exists no Hilbert-space metric which
would make both of the models (\ref{asyre}) and (\ref{cosy})
(with real spectra) mutually compatible, i.e.,
simultaneously quasi-Hermitian.
}

\textcolor{black}{
In conclusion let us add that
conceptually,
it would be entirely straightforward
to test also the compatibility of
the triplets or multiplets of certain
preselected operators $A_j$ with real spectra and with
$j = 1, 2, \ldots, K$.
In such cases we could merely replace
the two=term requirement (\ref{2veri})
by its $K-$term generalization.
The number of constraints will then grow quickly with $K$ since
the $(K-1)-$plet of Eqs.~(\ref{[15]})
(i.e., as many as $N(N-1)(K-1)/2$
formally independent conditions)
will have to be imposed
upon the mere $N-$plet of
variable parameters.
}

\section{Summary}

In our present paper we emphasized the importance of a
possible guarantee of a mutual compatibility
between any two different candidates $A_i$ and $A_2$
for the observables, and we
explained how
such a guarantee could be provided in practice.

Before the formulation of our results we emphasized that,
first of all, the
non-Hermiticity of $A_1\neq A_1^\dagger$ and of $A_2\neq A_2^\dagger$
is just an artifact of our
decision
of
performing
the mathematical operations (like, e.g.,
the localizations for spectra, etc)
in a preselected and, by assumption, manifestly
unphysical but by far the
user-friendliest Hilbert space ${\cal H}={\cal H}_{mathematical}$.
Secondly,
we pointed out that
there is a large difference between the quantum systems and
states living in the
finite- or infinite-dimensional Hilbert spaces and,
for the sake of brevity, we restricted our attention
just to the former family.

In the first step of our constructive
resolution of the problem
of compatibility of $A_1$ with $A_2$
we decided to treat
the inner-product metric
$\Theta$
as an operator
product
of Dyson map $\Omega$
with its conjugate
(cf. Eq.~(\ref{met}) above).

In the second step we noticed and
made use of the
close connection between the conjugate
operator  $\Omega^\dagger$
and an unnormalized
(or, more precisely, ``zeroth'', arbitrarily normalized)
set of vectors
defined as eigenvectors
of the conjugate
of any one of the observables in question.
This led quickly to our ultimate main result
(cf. Lemma \ref{lemmadva}).
Due to the generality of the latter result
%
\textcolor{black}{
we only have to remind}
the readers that
at present,
the QHQM-inspired
use of the Hermitizable operators is widely accepted
\cite{ali,Carl,SIGMA} and used
in methodical studies \cite{book,Carlbook}
as well as
in fairly diverse phenomenological considerations
\cite{BG,Christodoulides,KGali,Ju}.

In the narrower framework of the
hiddenly unitary quantum theory
a key to the mathematical consistency of
the corresponding description of physical reality
has been found
in the well known fact that a typical operator $A_j$
which is non-Hermitian in a preselected (and, presumably,
sufficiently ``user-friendly'')
Hilbert space ${\cal H}_{mathematical}$
can still be reinterpreted as self-adjoint in some other,
``amended''
Hilbert space ${\cal H}_{physical}$.
A decisive reason of feasibility
of such a model-building strategy
lies in the
possibility of amendment
${\cal H}_{mathematical}\to {\cal H}_{physical}$ of the space
which can be
mediated by the mere upgrade (\ref{change}) of the inner product in
${\cal H}_{mathematical}$
(to be compared with Eq. Nr. (2.2)
of review \cite{Geyer} where the authors used symbol $T$
in place of our present $\Theta$).

The latter form of the representation of
a user-unfriendly Hilbert space ${\cal H}_{physical}$
in its user-friendly alternative ${\cal H}_{mathematical}$
has been found to be a decisive technical merit of the
innovation.
The formalism could be called quantum mechanics in
quasi-Hermitian representation because
all of the candidates
$A_j$
for the observables had to satisfy the same quasi-Hermiticity
constraint (\ref{foral}) in
${\cal H}_{mathematical}$
(cf. also Eq. Nr. (2.1e) and the related comments
in \cite{Geyer}).

Without any danger of confusion one may
narrow the scope of our present considerations
by dealing with the first nontrivial scenario
in which one is given a quasi-Hermitian Hamiltonian $H \ (= A_1)$
and another observable $Q \ (= A_2)$ treated, very formally (and
in a way inspired by our preceding rather realistic
QHQM study \cite{arabky}), as
a ``spatial coordinate''.
A generalization to the models with multiplets of observables
is obvious.

Independently of the physical motivation as used in paper \cite{arabky}
the necessity of considering at least two operators of observables
emerged, most urgently, in the potential applications
of QHQM in quantum cosmology \cite{aliKG}.
In such a context, indeed, a mathematically consistent
treatment of quantum gravity necessarily requires
a nontrivial dynamical nature of the space.
In the related literature
the authors usually speak about the
``background-independent quantization''
of gravity -- see, e.g.,
the truly comprehensive monograph \cite{Thiemann}.

The complex nature of the problems of quantum gravity
lies already far beyond the scope of
our present paper, of course.
For our present methodical purposes
it was sufficient
to work just with the Hilbert spaces
of a finite dimension $N < \infty$, ${\cal H}= {\cal H}^{(N)}$.
Naturally, a transition to the models with $N=\infty$
might still be highly nontrivial: As a word of warning
let us recall, {\it pars pro toto},  a sample of
emergent subtleties in~\cite{Siegl} or \cite{Rovelli}.

\newpage

\end{document}